\DeclareMathOperator*{\minimize}{minimize:}
\newcommand{\Hspace}{\mathcal{H}}
\newcommand{\densityH}[1][\mathcal{H}]{\tilde{D}(#1)}
\newcommand{\normdensityH}[1][\mathcal{H}]{D(#1)}
\newcommand{\Hinf}{\ensuremath{\mathcal{H}_\infty}}
\newcommand{\im}{\textrm{im}}
\newcommand{\Pin}{\bar{\Pi}}
\newcommand{\scpub}{S_C} 
\newcommand{\keypriv}{S_Z} 
\newcommand{\skey}{S_K}
\newcommand{\rhot}{\tilde{\rho}}
\newcommand{\sigmat}{\tilde{\sigma}}
\newcommand{\sinf}{\ensuremath{\mathbf{S}_\infty}}
\newcommand{\sfin}{\ensuremath{\mathbf{S}_N}}
\newcommand{\e}{\epsilon}
\newcommand{\rrho}{\sqrt{\rho}}
\newcommand{\tnorm}[1]{\norm{#1}_1}
\newcommand{\snorm}[1]{\norm{#1}_\infty}
\newcommand{\id}{\mathbbm{1}}
\newcommand{\adj}[1]{{#1}^\dagger}
\newtheorem{thm}{Theorem}
\newtheorem{lemma}{Lemma}
\theoremstyle{definition}
\begin{document}

\title{An Improved Correction Term for Dimension Reduction in Quantum Key Distribution}

\author{Twesh Upadhyaya}
\email{twesh.upadhyaya@uwaterloo.ca}
\affiliation{Institute for Quantum Computing and Department of Physics and Astronomy\\University of Waterloo, Waterloo, Ontario, Canada N2L 3G1}
\author{Thomas van Himbeeck}
\affiliation{Institute for Quantum Computing and Department of Physics and Astronomy\\University of Waterloo, Waterloo, Ontario, Canada N2L 3G1}
\affiliation{Department of Electrical \& Computer Engineering, University of Toronto, Toronto, Ontario, Canada M5S 3G4}
\author{Norbert L\"{u}tkenhaus}
\affiliation{Institute for Quantum Computing and Department of Physics and Astronomy\\University of Waterloo, Waterloo, Ontario, Canada N2L 3G1}

\date{\today}

\begin{abstract}
The dimension reduction method \cite{Upadhyaya2021} enables security proofs of quantum key distribution (QKD) protocols that are originally formulated in infinite dimensions via reduction to a tractable finite-dimensional optimization. The reduction of dimensions is associated with a correction term in the secret key rate calculation. The previously derived correction term is loose when the protocol measurements are nearly block-diagonal with respect to the projection onto the reduced finite-dimensional subspace. Here, we provide a tighter correction term. It interpolates between the two extreme cases where all measurement operators are block-diagonal, and where at least one has maximally large off-diagonal blocks. This new correction term can reduce the computational overhead of applying the dimension reduction method by reducing the required dimension of the chosen subspace.
\end{abstract}

\maketitle

\section{Introduction}
Quantum key distribution (QKD) is a promising quantum technology, enabling two parties to communicate securely, even if an eavesdropper has unlimited computational power \cite{Scarani2009,Xu2020,Pirandola2020}. 

A security analysis of a QKD protocol derives the rate at which a secret key can be generated at the specified security level. Recently, numerical tools have been introduced to perform these key rate calculations, by reliably solving an optimization over the joint state of Alice and Bob \cite{Coles2016,Winick2018}. These tools are useful for practical QKD security proofs as they enable modelling of imperfect devices, encapsulate extended side channel models, and take advantage of specific details of the observed data to get an increased key generation rate.

As most QKD protocols are implemented optically, the underlying Hilbert space is infinite-dimensional. The relevant optimization is then over infinite-dimensional states, so the aforementioned numerical tools cannot be applied directly. For discrete-variable (DV) protocols, techniques such as squashing maps or the more general flag-state squasher can be used to map the problem to an effective finite-dimensional optimization \cite{Zhang2021,Li2020}. These tools, however, are not straightforwardly applicable to continuous-variable (CV) protocols. This is because they rely on all the protocol's measurement operators commuting with a projector on an underlying low-dimensional subspace, a subspace which essentially captures the protocol's behaviour.

Recently, we have extended the numerical framework to encompass both CV and DV protocols in infinite-dimensional Hilbert spaces via the \emph{dimension reduction method} \cite{Upadhyaya2021}. This method provides a tight lower bound on the infinite-dimensional key rate optimization by optimizing a specified finite-dimensional problem, and subtracting a correction term that bounds the difference between the original infinite-dimensional problem and the finite-dimensional one. The correction term bounds how much the key rate can increase under projection. In our previous work, we found an analytic form for the correction term that was applicable to any QKD protocol, but loose in certain cases. We also found that the correction term is zero when all the POVMs commute with the same projector. We conjectured that a tighter correction term exists which interpolates between these two cases; becoming smaller when the measurements are closer to block-diagonal. In this work, we find such a correction term. Practically, this is relevant for improving the performance of the dimension reduction method, and enabling its applications to more computationally demanding scenarios. It may also be of independent interest to better understand how the key rate changes under projection.

\section{Background}
In this section we briefly review the formulation of the asymptotic key rate as a convex minimization and the dimension reduction method; focusing in particular on the correction term. For a more detailed discussion, we refer the reader to Ref. \cite{Upadhyaya2021}.

\subsection{Key Rate Optimization and Dimension Reduction Method}
In each key generation round of a QKD protocol, Alice and Bob establish a quantum state $\rho_{AB}$; and Eve holds its purification in her register $E$. Alice and Bob measure their respective subsystems and perform classical data processing, which may involve public announcements, to generate a raw key, which is the key before error correction and privacy amplification. The raw key is stored in the register $Z$, and any public announcements are stored in the register $C$. These measurement and postprocessing steps can be represented as a quantum-to-classical channel $\Phi: AB \rightarrow ZC$. We use the notation $[E]=EC$ for the composite register containing all information accessible to Eve. The asymptotic key rate per signal sent under collective attacks is given by the Devetak-Winter formula \cite{Devetak2005}, which can be expressed as a convex optimization \cite{Winick2018},
\begin{equation}
\label{convexkeyrate}
R^\infty=\min_{\rho_{AB} \in \mathbf{\sinf}} [H(Z|[E])_{\Phi(\rho_{ABE})} ]-\delta^{leak}_{EC}.
\end{equation}
The convex feasible set $\sinf$ is constrained by the parameter estimation Alice and Bob perform, as well as the reduced density matrix constraint for prepare-and-measure protocols \cite{Winick2018}. The error-correction cost $\delta^{leak}_{EC}$ can be observed directly and does not need to be optimized over. Using the shorthand $f$ for the convex objective function, our goal is to compute tight lower bounds on the following minimization,
\begin{equation}
\label{inf}
\min_{\rho \in\sinf} f(\rho).
\end{equation}

Tractable lower bounds on this infinite-dimensional optimization can be computed via the dimension reduction method \cite{Upadhyaya2021}. There are four steps to apply this method; we briefly summarize them here and give references to the relevant sections of Ref. \cite{Upadhyaya2021}. First choose a finite-dimensional subspace, represented by a projector $\Pi$ (Sec. IV A). Next, find a bound $W$ on the weight of $\rho$ outside this subspace (Sec. IV B). Third, determine a correction term $\Delta$ for the objective function $f$ (Sec. IV C). Finally, construct a finite-dimensional set $\sfin$ satisfying certain properties (Sec. IV D). The desired lower bound is then
\begin{equation}
\label{drm}
\min_{\rhot \in\sfin} f(\rhot)- \Delta(W) \leq \min_{\rho \in\sinf} f(\rho),
\end{equation}
where the finite-dimensional optimization can be solved numerically and the correction term is computed analytically. Tildes denote operators that are subnormalized.

\subsection{Correction Term}\label{corterm}
Intuitively, the correction term limits how much the function $f$ can increase under projection. Formally, it satisfies the following property,
\begin{equation}
\label{ucdupdefn}
\Tr(\rho \Pin) \leq W \implies f(\Pi \rho \Pi) - f(\rho) \leq  \Delta(W), \quad \forall \rho \in \sinf.
\end{equation}
In this case, we say that $f$ is \emph{uniformly close to decreasing under projection} (UCDUP) on $\sinf$, with correction term $\Delta$. The correction term we derive will apply on the set of all density operators, $\densityH[\Hinf]$, and for any choice of projection $\Pi$, so it can be applied to any QKD protocol.

\subsection{Postprocessing Map}
As we have noted, the postprocessing map $\Phi$ is a quantum-to-classical channel from $AB$ to $ZC$. It follows that $\Phi$ can be realized by a measurement. That is, $\Phi$ has the form 
\begin{equation}
\label{postmappovm}
\Phi(\rho_{ABE}) = \sum_{\substack{z \in \keypriv\\c \in \scpub}} \dyad{z}_Z \otimes \dyad{c}_C \otimes \Tr_{AB} \left[ (P^{z,c}_{AB} \otimes \id_E) \rho_{ABE}\right] ,
\end{equation}
where $\{ P^{z,c}_{AB} \}_{\substack{z \in \keypriv\\c \in \scpub}}$ is some positive operator-valued measure (POVM), over the alphabet of key symbols $\keypriv$ and public announcements $\scpub$ \footnote{As discussed in Ref. \cite{Upadhyaya2021}, the discard symbol $\perp$ is not included in the set of key symbols.}. For simplicity, we re-index the POVM by $k \in \skey\equiv \keypriv \times \scpub$. 

\section{Results}
We first introduce three lemmas which will be needed to prove our main theorem, which is a tight correction term. The first lemma is a continuity bound for conditional entropy in terms of trace distance. The second lemma lets us consider  the dephased state instead of the projected one; dephased means the off-diagonal blocks with respect to the projector and its complement are zeroed out. The third lemma provides a bound on the trace norm of a specific form of operator, which arises in the proof of the main theorem and is related to Eve's conditional states. 

\begin{lemma}[From Ref. \cite{Upadhyaya2021}]
\label{ucduplemma} 
Let $\mathcal{H}_A$ and $\mathcal{H}_B$ be two Hilbert spaces, where the dimension of $\mathcal{H}_A$ is $\abs{A}$ while $\mathcal{H}_B$ can be infinite-dimensional. Let $\rhot_{AB}, \sigmat_{AB}\in\tilde{D}( \mathcal{H}_A \otimes  \mathcal{H}_B)$ be two subnormalized, classical-quantum states with $\Tr(\rhot_{AB})\geq\Tr(\sigmat_{AB})$. If $\frac{1}{2}\tnorm{\rhot_{AB}-\sigmat_{AB}}\leq\epsilon$, then
\begin{equation}
H(A|B)_{\sigmat_{AB}} - H(A|B)_{\rhot_{AB}} \leq  \e \log_2\abs{A} +(1+\e)h\left(\frac{\e}{1+\e}\right),
\end{equation}
where $h(x)$ is the binary entropy function.
\end{lemma}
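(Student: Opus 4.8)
The plan is to prove this as a variant of the Alicki--Fannes--Winter continuity bound for conditional entropy, specialized to the classical-quantum, subnormalized, one-sided setting. Since $\rhot_{AB}$ and $\sigmat_{AB}$ are classical on $A$, write $\rhot_{AB} = \sum_x p_x \dyad{x}_A \otimes \rhot_B^x$ and similarly for $\sigmat_{AB}$, though the most robust route does not even need this decomposition.

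First I would reduce to the normalized case by a purification-of-weights trick: adjoin a classical flag system to turn the subnormalized states into normalized states on a slightly enlarged classical register, absorbing the trace deficit into an extra symbol. Because $\Tr(\rhot_{AB}) \ge \Tr(\sigmat_{AB})$, the deficit can be packaged so that the trace-distance hypothesis is preserved (up to the same $\e$) and the $A$-dimension in the entropy bound stays $\abs{A}$ rather than $\abs{A}+1$; one has to be a little careful here, and I expect this bookkeeping to be the main obstacle, since naively enlarging $A$ would weaken the $\e\log_2\abs{A}$ term. The standard fix is to put the flag on $B$'s side or to note that on the support where the extra symbol lives the conditional entropy contribution vanishes, so the effective dimension seen by the continuity argument remains $\abs{A}$.

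Next, with normalized classical-quantum states at trace distance $\le \e$, I would invoke (or re-derive) Winter's tightened Alicki--Fannes bound. The key steps are: (i) introduce the optimal coupling / the operator $\Delta_{AB} = \rhot_{AB} - \sigmat_{AB}$ and split it into positive and negative parts $\Delta_\pm$ with $\Tr\Delta_+ = \Tr\Delta_- \le \e$; (ii) form the state $\omega_{AB} = \sigmat_{AB} + \Delta_+ = \rhot_{AB} + \Delta_-$, which dominates both and lies on the segment between them appropriately rescaled; (iii) apply joint convexity / concavity of conditional entropy together with the fact that for classical-$A$ states $H(A|B)$ is concave and bounded between $0$ and $\log_2\abs{A}$; (iv) collect terms to get the $\e\log_2\abs{A}$ piece from the dimension of $A$ and the $(1+\e)h\!\left(\tfrac{\e}{1+\e}\right)$ piece from the entropy of the mixing weights in the convex decomposition. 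Concavity of conditional entropy in the classical-$A$ case is what lets us avoid the extra factor of $2$ that appears in the fully quantum Alicki--Fannes--Winter bound.

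Finally I would assemble the pieces: the one-sided hypothesis $\Tr(\rhot_{AB}) \ge \Tr(\sigmat_{AB})$ guarantees that the asymmetric quantity $H(A|B)_{\sigmat} - H(A|B)_{\rhot}$ (rather than its absolute value) is the one controlled, which is exactly what is needed downstream for the UCDUP argument, and check that all constants match the stated form. Since this lemma is quoted verbatim from Ref.~\cite{Upadhyaya2021}, I would in practice simply cite that derivation; the sketch above records the argument for completeness and indicates that the only non-routine point is handling subnormalization without inflating the $A$-dimension.
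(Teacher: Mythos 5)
The paper does not actually prove this lemma---it imports it verbatim and defers to Appendix~A of Ref.~\cite{Upadhyaya2021}---so your closing remark that you would ``simply cite that derivation'' is literally what the paper does. Your sketch of the underlying argument is the right one in outline: the derivation in that appendix is the Winter-tightened Alicki--Fannes argument, with the interpolating operator $\omega = \sigmat + \Delta_+ = \rhot + \Delta_-$, concavity plus almost-convexity of conditional entropy, and the classical-$A$ bounds $0 \le H(A|B) \le \log_2\abs{A}$ supplying the single (rather than doubled) $\e\log_2\abs{A}$ term. Your steps (ii)--(iv) match that.

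The gap is in your handling of subnormalization, which you correctly identify as the delicate point but do not resolve. First, the claim in step (i) that $\Tr\Delta_+ = \Tr\Delta_- \le \e$ is false here: for subnormalized states, $\Tr\Delta_+ - \Tr\Delta_- = \Tr(\rhot) - \Tr(\sigmat) \ge 0$, and the correct (and sufficient) statement is $\Tr\Delta_- \le \frac{1}{2}\tnorm{\rhot - \sigmat} \le \e$. This inequality is exactly where the one-sided trace hypothesis enters, and it is what makes the \emph{one-sided} entropy bound $H(A|B)_{\sigmat} - H(A|B)_{\rhot} \le \Delta(\e)$ come out with the stated constants. Second, the proposed normalization-by-flag reduction does not preserve the trace-distance hypothesis: completing both states to normalized ones with a flag (on either side) adds $\frac{1}{2}\left|\Tr(\rhot) - \Tr(\sigmat)\right|$ to the trace distance, which can itself be as large as $\e$, so the hypothesis degrades to $2\e$ and the constants no longer match. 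The clean route---and the one taken in the cited appendix---is to skip the normalization entirely and run the convex-decomposition argument directly on the subnormalized operators, using that conditional entropy is scale-covariant and that $0 \le H(A|B)_{\taut} \le \Tr(\taut)\log_2\abs{A}$ for any subnormalized cq state $\taut$. With those two repairs your outline closes into a correct proof.
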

\begin{proof}
See Appendix A of Ref. \cite{Upadhyaya2021}.
\end{proof}

Define $\Xi_{AB}$ to be a dephasing channel associated with the projector $\Pi$ and its complement $\Pin$ as
\begin{equation}\label{dephasingchannel}
\Xi_{AB}(\rho)\equiv\Pi\rho\Pi+\Pin \rho \Pin.
\end{equation}
\begin{lemma}
\label{ictdephaslemma}
For any state $\rho_{AB}$, $f(\Pi \rho_{AB} \Pi) \leq H(Z|[E])_{\Phi(\Xi (\rho_{ABE}) )} $
\end{lemma}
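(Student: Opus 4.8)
The plan is to exploit the structure of the postprocessing map $\Phi$ together with the fact that it ends with a measurement in the registers $Z$ and $C$, which are \emph{classical}. First I would note that the quantity $f(\Pi\rho_{AB}\Pi)$ equals $H(Z|[E])_{\Phi(\Pi\rho_{ABE}\Pi)}$ once we purify and track Eve's register appropriately; here $\Pi$ acts only on $AB$. The key observation is that the outcome probabilities and conditional states produced by $\Phi$ depend on $\rho_{ABE}$ only through the expectation values $\Tr_{AB}[(P^k_{AB}\otimes\id_E)\rho_{ABE}]$, so I want to compare $\Phi$ applied to $\Pi\rho\Pi$ with $\Phi$ applied to $\Xi(\rho) = \Pi\rho\Pi + \Pin\rho\Pin$.

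The central step is a data-processing / concavity argument. Writing $\Xi(\rho_{ABE}) = \Pi\rho_{ABE}\Pi + \Pin\rho_{ABE}\Pin$ as a sum of two (subnormalized) terms, I would introduce a flag register $F$ recording which branch the state came from: consider the state $\Pi\rho\Pi\otimes\dyad{0}_F + \Pin\rho\Pin\otimes\dyad{1}_F$, which is a valid normalized state when $\rho$ is. Apply $\Phi$ (acting trivially on $F$). On one hand, discarding $F$ gives exactly $\Phi(\Xi(\rho))$, and since $F$ is classical and held by nobody relevant, discarding it can only increase the conditional entropy $H(Z|[E])$ — more precisely, $H(Z|[E])_{\Phi(\Xi(\rho))} \geq H(Z|[E]F)_{\ldots}$, and conditioning on the classical flag makes things no larger. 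On the other hand, $H(Z|[E]F)$ is an average over the two branches, and the $F=0$ branch, after renormalization, is precisely $\Phi(\Pi\rho\Pi)$ — but one must be careful because $\Pi\rho\Pi$ is subnormalized, so I would phrase the whole comparison in terms of subnormalized classical-quantum states and use that $H(Z|[E])$ for a classical-$Z$ c-q state is monotone under adding a nonnegative operator to the conditioning system in the right sense, or equivalently use joint convexity/concavity of conditional entropy.

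Concretely, the cleanest route is: (i) $H(Z|[E])_{\Phi(\Xi(\rho))}$, with $\Xi(\rho)$ viewed in the larger space, dominates $H(Z|[E])$ evaluated on just the $\Pi$-block because the extra $\Pin$-block only adds classically-distinguishable weight, and for a c-q state conditioned on a system, adding an orthogonal component (orthogonal on the conditioning side, here via the $F$ flag) gives $H(Z|[E]F) = p_0 H(Z|[E])_{(0)} + p_1 H(Z|[E])_{(1)} \geq p_0 H(Z|[E])_{(0)}$; (ii) the $F=0$ component, $\Phi(\Pi\rho\Pi)$, has $Z$-conditional entropy (in the subnormalized convention of Lemma~\ref{ucduplemma}) equal to $f(\Pi\rho\Pi)$ by definition of $f$; and (iii) the flag $F$ can be merged or discarded without increasing $H(Z|[E])$. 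I expect the main obstacle to be handling the normalization bookkeeping carefully — matching the subnormalized convention in which $f$ is defined on projected states against the normalized-state language in which entropies are usually manipulated — and making precise the claim that zeroing out the off-diagonal blocks $\Pi\rho\Pin + \Pin\rho\Pi$ cannot decrease $H(Z|[E])_{\Phi(\cdot)}$; this should follow from the concavity of conditional entropy under the classical-mixture / flag construction, but the cleanest formulation of that step is the part I would most want to nail down rigorously.
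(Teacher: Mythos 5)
Your argument is correct and is essentially the paper's own proof: the paper writes $f(\Pi\rho_{AB}\Pi) = H(Z|[E])_{\Phi(\Pi\rho_{ABE}\Pi)} \leq H(Z|[E])_{\Phi(\Pi\rho_{ABE}\Pi)} + H(Z|[E])_{\Phi(\Pin\rho_{ABE}\Pin)} \leq H(Z|[E])_{\Phi(\Xi(\rho_{ABE}))}$, using nonnegativity of the conditional entropy of a classical-quantum state (to add the $\Pin$ branch), concavity of conditional entropy (to merge the two subnormalized blocks), and linearity of $\Phi$. Your flag-register construction is precisely the standard proof of that concavity step, and the normalization bookkeeping you were worried about resolves itself because the conditional entropy of a subnormalized state scales linearly with its trace, so one can (as the paper does) work with the subnormalized blocks directly and never renormalize.
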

\begin{proof}
Expanding definitions, we have that
\begin{align}
f(\Pi \rho_{AB} \Pi)&=H(Z|[E])_{\Phi(\Pi \rho_{ABE} \Pi)}\\ 
&\leq H(Z|[E])_{\Phi(\Pi \rho_{ABE} \Pi)}+H(Z|[E])_{\Phi(\Pin \rho_{ABE} \Pin)} \label{ictnonnegcq}\\
&\leq H(Z|[E])_{\Phi(\Pi \rho_{ABE} \Pi)+\Phi(\Pin \rho_{ABE} \Pin)}  \label{ictconcav}\\
&=H(Z|[E])_{\Phi(\Pi \rho_{ABE} \Pi+\Pin \rho_{ABE} \Pin)} \label{ictlinear}\\
&=H(Z|[E])_{\Phi(\Xi (\rho_{ABE}) )}.
\end{align}
Line $\eqref{ictnonnegcq}$ follows because the second term is the conditional entropy of a classical-quantum state and thus nonnegative, $\eqref{ictconcav}$ follows because conditional entropy is concave, and $\eqref{ictlinear}$ follows simply because the map $\Phi$ is linear.
\end{proof}

\begin{lemma}
\label{icttracenormlemma}
Let $P$ be a POVM element. With respect to a projection $\Pi$ and its complement $\Pin$, write $P$ as a block matrix
\begin{equation}
P =\begin{pmatrix}
\Pi P \Pi &  \Pi P \Pin \\
\Pin P \Pi & \Pin P \Pin \\
\end{pmatrix} \equiv
\begin{pmatrix}
A &  B \\
\adj{B} & D \\
\end{pmatrix}.
\end{equation}
Define $H = \begin{pmatrix}0 &  B \\\adj{B} & 0 \\\end{pmatrix}$ as the off-diagonal portion of $P$.

Let $\rho$ be a state, and define two new states corresponding to the normalized on-diagonal blocks of $\rho$: $\rho^\Pi = \frac{\Pi \rho \Pi}{\Tr( \rho \Pi)}$ and $\rho^{\Pin} = \frac{\Pin \rho \Pin}{\Tr( \rho \Pin)}$. Define the measurement probabilities $r = \Tr(\rho^\Pi P)$ and $s = \Tr(\rho^{\Pin} P)$. Let $W \geq \Tr(\rho \Pin)$.

It holds that 
\begin{equation}
\tnorm{\rrho H \rrho} \leq (r+s) \sqrt{W} \ \snorm{\sqrt{A}^g B \sqrt{D}^g}.
\end{equation}
\end{lemma}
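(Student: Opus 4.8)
The plan is to use positivity of the POVM element $P$ to factor its off\nobreakdash-diagonal block, and then estimate the trace norm with Hölder\nobreakdash-type (Schatten) inequalities. Since $P\ge 0$, its compressions $A=\Pi P\Pi$ and $D=\Pin P\Pin$ are positive, and a standard fact about positive block operators (the operator Schur\nobreakdash-complement / Douglas factorization) gives $B=\sqrt{A}\,K\,\sqrt{D}$ with $K=\sqrt{A}^g B\sqrt{D}^g$ a contraction, $\snorm{K}\le 1$. Here $\sqrt{A}$ is supported on $\im(\Pi)$ and $\sqrt{D}$ on $\im(\Pin)$, and the range inclusions $\im(B)\subseteq\overline{\im(A)}$ and $\im(B^\dagger)\subseteq\overline{\im(D)}$ forced by $P\ge 0$ are exactly what make $\sqrt{A}^g B\sqrt{D}^g$ reconstruct $B$. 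The key point is that I will carry $\snorm{K}=\snorm{\sqrt{A}^g B\sqrt{D}^g}$ along as a symbolic quantity rather than bounding it by $1$ — this is where the improvement over the previous correction term lives.

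Writing $H=B+B^\dagger$ with $B\equiv\Pi P\Pin$, I get $\rrho H\rrho=M+M^\dagger$ where $M:=\rrho\,\sqrt{A}\,K\,\sqrt{D}\,\rrho$, so $\tnorm{\rrho H\rrho}\le 2\tnorm{M}$ by the triangle inequality. Grouping $M=(\rrho\sqrt{A})\,K\,(\sqrt{D}\rrho)$ and applying the three\nobreakdash-factor Hölder inequality $\tnorm{XYZ}\le\norm{X}_2\,\snorm{Y}\,\norm{Z}_2$ gives $\tnorm{M}\le\norm{\rrho\sqrt{A}}_2\,\snorm{K}\,\norm{\sqrt{D}\rrho}_2$. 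The two Hilbert–Schmidt norms are then identified with measurement probabilities via cyclicity and the definitions in the statement: $\norm{\rrho\sqrt{A}}_2^2=\Tr(\sqrt{A}\rho\sqrt{A})=\Tr(A\rho)=\Tr(\rho\Pi)\,\Tr(\rho^\Pi P)=\Tr(\rho\Pi)\,r$, and likewise $\norm{\sqrt{D}\rrho}_2^2=\Tr(D\rho)=\Tr(\rho\Pin)\,s$.

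Combining, $\tnorm{\rrho H\rrho}\le 2\sqrt{\Tr(\rho\Pi)\,\Tr(\rho\Pin)}\,\sqrt{rs}\,\snorm{K}$. Finally I bound $\Tr(\rho\Pi)\le\Tr(\rho)\le 1$ and $\Tr(\rho\Pin)\le W$ — crucially, only one of the two on\nobreakdash-diagonal weights is small, which is why a single factor $\sqrt{W}$ (not $W$) appears — and apply AM–GM in the form $2\sqrt{rs}\le r+s$, yielding $\tnorm{\rrho H\rrho}\le(r+s)\sqrt{W}\,\snorm{\sqrt{A}^g B\sqrt{D}^g}$, which is the claim. The degenerate cases $\Tr(\rho\Pi)=0$ or $\Tr(\rho\Pin)=0$ are trivial since the corresponding Hilbert–Schmidt norm vanishes.

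The step requiring the most care is the block\nobreakdash-operator factorization: one needs the precise statement that $P\ge 0$ yields $B=\sqrt{A}K\sqrt{D}$ with $K=\sqrt{A}^g B\sqrt{D}^g$ and $\snorm{K}\le 1$, which involves the supports of $\sqrt{A}$, $\sqrt{D}$ and their pseudoinverses, and in the infinite\nobreakdash-dimensional setting these square roots need not have closed range. Everything after that is routine Schatten\nobreakdash-norm bookkeeping; the only genuinely quantitative choices are to distribute $\sqrt{A}$ onto the left copy of $\rrho$ and $\sqrt{D}$ onto the right copy so the Hilbert–Schmidt norms become the probabilities $r$ and $s$, and to retain $\snorm{K}$ instead of discarding it.
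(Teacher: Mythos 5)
Your proof is correct, and it reaches the lemma by a genuinely different route than the paper. The paper formulates $\tnorm{\rrho H \rrho}$ as a semidefinite program, restricts to feasible points of the form $\rrho R \rrho$, and exhibits the explicit feasible guess $R = a\,\Pi P \Pi + b\,\Pin P \Pin$ with $a=\snorm{K}\sqrt{W}$, $b=\snorm{K}/\sqrt{W}$, whose feasibility is certified by the same Bhatia block-positivity criterion you invoke; evaluating the objective then gives $(r+s)\snorm{K}\sqrt{W}$. You instead use that criterion in its factorization form $B=\sqrt{A}K\sqrt{D}$ and finish with the three-factor H\"older inequality plus AM--GM, which is more elementary (no SDP machinery) and in fact yields the slightly sharper intermediate bound $2\snorm{K}\sqrt{rs\,\Tr(\rho\Pi)\Tr(\rho\Pin)}$ before relaxing to the stated form; the paper's choice of $a,b$ is precisely the dual-side counterpart of your AM--GM step, so the two arguments are equally tight at the end. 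Both approaches share the same delicate ingredient --- the infinite-dimensional validity of the statement that $P\ge 0$ iff $K=\sqrt{A}^g B\sqrt{D}^g$ is a contraction, together with the range inclusions $\overline{\im(B)}\subseteq \mathrm{supp}(A)$ and $\overline{\im(\adj{B})}\subseteq \mathrm{supp}(D)$ needed to reconstruct $B$ --- which the paper also only addresses in a footnote, so you are not assuming anything beyond what the published proof does; your identification of the Hilbert--Schmidt norms with $\Tr(\rho\Pi)\,r$ and $\Tr(\rho\Pin)\,s$ and the handling of the degenerate cases are all correct.
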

Here  $\norm{\cdot}_p$ denotes the Schatten $p$-norm and $(\cdot)^g$ the generalized inverse. Of particular importance to us will be the trace norm ($p=1$) and the spectral norm ($p=\infty$). The generalized inverse is defined as the inverse of an operator on its support, so that $A^g A = \Pi_{supp(A)}$.

\begin{proof}
The trace norm can be expressed as a semidefinite program (SDP) in the following manner  \cite{Watrous2018},
\begin{IEEEeqnarray*}{uC;L}
$\tnorm{\rrho H \rrho} \  = \ \displaystyle{\minimize_{X}}$& &\Tr X \\ \IEEEyesnumber
subject to:& 	&X \geq \rrho H \rrho \\
&	& X \geq -\rrho H \rrho \label{icttnsdp1} \\ 
&	&X\geq 0.
\end{IEEEeqnarray*}

For any feasible $X$, $ \Pi_{\im(\rho)} X \Pi_{\im(\rho)}$ is still feasible, and can only decrease the value of the objective function ($\Pi_{\im(\rho)}$ is the projection onto the image of $\rho$). We can thus assume WLOG that $X=\Pi_{im(\rho)} X \Pi_{im(\rho)}$. 

Then, $X$ can be expressed as $\rrho R \rrho$ for some $R$. This lets us rewrite the SDP in Eq. \eqref{icttnsdp1} as
\begin{IEEEeqnarray*}{uC;L}
$\tnorm{\rrho H \rrho} \  = \ \displaystyle{\minimize_{R}}$& &\Tr (\rho R) \\ \IEEEyesnumber
subject to:& 	&\rrho R \rrho \geq \rrho H \rrho \\  
&	& \rrho R \rrho \geq -\rrho H \rrho \label{icttnsdp2} \\ 
&	&R\geq 0.
\end{IEEEeqnarray*}
As this is a minimization, any feasible guess leads to an upper bound. To show $R$ is feasible, it suffices to show $R$ is positive and satisfies $R\geq \pm H$.

The remainder of the proof consists of three steps. We first make a guess for a feasible $R$. We then prove that it is indeed feasible. Finally, we calculate the corresponding value of the objective function.

Recall that $P$ is positive as it is a POVM element. In terms of the block matrix characterization with respect to the projectors $\Pi$ and $\Pin$, we have that
\begin{equation}
\begin{pmatrix}
A &  B \\
\adj{B} & D \\
\end{pmatrix}  \geq 0.
\end{equation}
By Theorem IX.5.9 of Ref. \cite{Bhatia1996}, the above holds if and only if $\sqrt{A}^g B \sqrt{D}^g \equiv K$ is a contraction, i.e. $\snorm{K}\leq1$ \footnote{In Ref. \cite{Bhatia1996}, this theorem is proven for finite-dimensional matrices where $B$ is a square block. However, nothing precludes the proof from applying in infinite dimensions and with rectangular blocks.}. Note $\snorm{K}$ ranges from $0$ to $1$ and quantifies how close to block-diagonal the POVM element is. In particular, $\snorm{K}=0$ when the POVM is exactly block-diagonal, since then $B=0$.

We now specify our guess to be 
\begin{equation}
\label{Rfeasible}
R= a\Pi P \Pi + b \Pin P \Pin,
\end{equation}
with the constants $a= \snorm{K} \sqrt{W}$ and $b=\frac{\snorm{K} }{\sqrt{W}}$ (assuming $W\neq0$, as the lemma follows immediately for $W=0$).

Let us verify that this guess is feasible. Since $R$ is manifestly positive, it suffices to show $R\pm H\geq 0$. Written in terms of block matrices, this condition is equivalent to 
\begin{equation}
R\pm H =\begin{pmatrix}
aA & \pm B \\
\pm \adj{B} & b D \\
\end{pmatrix} \geq 0.
\end{equation}

Again by Theorem IX.5.9 of Ref. \cite{Bhatia1996}, this condition is satisfied if and only if $\sqrt{aA}^g  (\pm B)  \sqrt{bD}^g $ is a contraction. Noting that $\sqrt{ab}=\snorm{K}$, this simplifies as
\begin{align}
\sqrt{aA}^g  (\pm B) \sqrt{bD}^g &= \frac{1}{\sqrt{ab}} \sqrt{A}^g  (\pm B) \sqrt{D}^g\\
&= \frac{\pm K}{\snorm{K}}.
\end{align}
The operator on the last line clearly has unit norm so is a contraction. Thus, it follows that $R\pm H\geq0$ and $R$ is a feasible guess. 

The objective function value is
\begin{equation}
\Tr(\rho R) = a \Tr(\rho \Pi P \Pi) + b \Tr(\rho \Pin P \Pin).
\end{equation}
Recall the measurement probabilities $r = \Tr(\rho^\Pi P)$ and $s = \Tr(\rho^{\Pin} P)$ introduced above. The first term can be upper bounded as
\begin{align}
a \Tr(\rho \Pi P \Pi) &= a \Tr(\Pi \rho \Pi \ \Pi P \Pi)\\
&= a \Tr(\Pi \rho \Pi) \Tr( \rho^\Pi  P )\\
&\leq a \Tr( \rho^\Pi  P )\\
&= r \snorm{K} \sqrt{W} ,
\end{align}
where in the second line we pull out the normalization of $\rho^\Pi$.

Similarly for the second term,
\begin{align}
b \Tr(\rho \Pin P \Pin) &= b \Tr(\Pin \rho \Pin\ \Pin P \Pin)\\
&= b \Tr(\Pin \rho \Pin) \Tr( \rho^{\Pin} P )\\
&\leq b W \Tr( \rho^{\Pin} P)\\
&= s \snorm{K} \sqrt{W}.
\end{align}

Thus, the feasible choice of $R$ in Eq. \eqref{Rfeasible} leads to the following upper bound on Eq. \eqref{icttnsdp2},
\begin{equation}
\tnorm{\rrho H \rrho}  \leq (r+s) \snorm{K}  \sqrt{W},
\end{equation}
and the proof is complete.
\end{proof}

We now state the theorem for the improved correction term. 
\begin{thm}
\label{ictucdupthm}
Consider the QKD objective function $f(\rho_{AB})=H(Z|[E])_{\Phi(\rho_{ABE})}$, with the map $\Phi$ defined by a POVM $\{ P_k \}_{k \in \skey}$ (see Eq. \eqref{postmappovm}). With respect to an arbitrary projection $\Pi$, write each $P_k$ as a block matrix
\begin{equation}
P_k =\begin{pmatrix}
\Pi P_k \Pi &  \Pi P_k \Pin \\
\Pin P_k \Pi & \Pin P_k \Pin \\
\end{pmatrix} \equiv
\begin{pmatrix}
A_k &  B_k \\
\adj{B_k} & D_k \\
\end{pmatrix}.
\end{equation}

For this projection $\Pi$, the QKD objective function $f$ is UCDUP on $\normdensityH[\Hinf]$ with correction term
\begin{equation}
\label{qkdictucdup}
\Delta(W)= c \sqrt{W} \log_2 \abs{Z} +  \left(1+c\sqrt{W} \right) h\left(\frac{c\sqrt{W}}{1+c\sqrt{W}}\right),
\end{equation} 
where $\abs{Z}=\abs{S_Z}$ is the dimension of the key map register and 
\begin{equation}
c= \max_{k \in \skey} \ \snorm{\sqrt{A_k}^g B_k \sqrt{D_k}^g} .
\end{equation}
\end{thm}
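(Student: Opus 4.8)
The plan is to chain the three lemmas in order. By Lemma~\ref{ictdephaslemma}, $f(\Pi\rho_{AB}\Pi)\le H(Z|[E])_{\Phi(\Xi(\rho_{ABE}))}$, so it is enough to bound $H(Z|[E])_{\Phi(\Xi(\rho_{ABE}))}-H(Z|[E])_{\Phi(\rho_{ABE})}$ from above by $\Delta(W)$. Since $\Phi(\rho_{ABE})$ and $\Phi(\Xi(\rho_{ABE}))$ are both normalized classical--quantum states on $Z[E]$ with $Z$ the classical, finite-dimensional register, Lemma~\ref{ucduplemma}, applied with $\sigmat=\Phi(\Xi(\rho_{ABE}))$, $\rhot=\Phi(\rho_{ABE})$ and $A\mapsto Z$ (so that $\log_2\abs{A}$ becomes $\log_2\abs{Z}$), reduces the whole theorem to the single trace-distance estimate $\tfrac12\tnorm{\Phi(\Xi(\rho_{ABE}))-\Phi(\rho_{ABE})}\le c\sqrt{W}$. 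Indeed, plugging $\e=c\sqrt{W}$ into Lemma~\ref{ucduplemma} yields exactly \eqref{qkdictucdup}.

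The key point is that one should \emph{not} bound this trace distance by contractivity of $\Phi$ on $\Xi(\rho_{ABE})-\rho_{ABE}$: that throws away all dependence on the measurement operators and could never produce the constant $c$, which vanishes when every $P_k$ is block-diagonal. Instead I would compute the output trace distance directly using the classical register. Writing Eve's subnormalized conditional states $\tau_E^k=\Tr_{AB}[(P_k\otimes\id_E)\rho_{ABE}]$ and $\sigma_E^k=\Tr_{AB}[(P_k\otimes\id_E)\Xi(\rho_{ABE})]$, both outputs are block-diagonal in the classical index $k$, so $\tnorm{\Phi(\Xi(\rho_{ABE}))-\Phi(\rho_{ABE})}=\sum_{k\in\skey}\tnorm{\sigma_E^k-\tau_E^k}$. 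Commuting $\Pi$ and $\Pin$ through the partial trace over $AB$ by cyclicity gives $\sigma_E^k-\tau_E^k=-\Tr_{AB}[(H_k\otimes\id_E)\rho_{ABE}]$, where $H_k=\Pi P_k\Pin+\Pin P_k\Pi$ is precisely the off-diagonal part of $P_k$ appearing in Lemma~\ref{icttracenormlemma}. Since $\rho_{ABE}$ is a purification of $\rho_{AB}$, the operator $\Tr_{AB}[(H_k\otimes\id_E)\rho_{ABE}]$ is, up to a transpose in the Schmidt basis (with $\mathcal{H}_E$ identified with $\mathcal{H}_{AB}$), equal to $\sqrt{\rho_{AB}}\,H_k\,\sqrt{\rho_{AB}}$, hence $\tnorm{\sigma_E^k-\tau_E^k}=\tnorm{\sqrt{\rho_{AB}}\,H_k\,\sqrt{\rho_{AB}}}$.

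Now Lemma~\ref{icttracenormlemma} applies to each $P_k$ with the hypothesis $W\ge\Tr(\rho_{AB}\Pin)$, giving $\tnorm{\sqrt{\rho_{AB}}\,H_k\,\sqrt{\rho_{AB}}}\le(r_k+s_k)\sqrt{W}\,\snorm{\sqrt{A_k}^g B_k\sqrt{D_k}^g}\le(r_k+s_k)\sqrt{W}\,c$, with $r_k=\Tr(\rho^\Pi P_k)$, $s_k=\Tr(\rho^{\Pin}P_k)$ as in that lemma. Summing over $k$ and using $\sum_k P_k\le\id_{AB}$, so that $\sum_k r_k=\Tr(\rho^\Pi\sum_k P_k)\le1$ and likewise $\sum_k s_k\le1$, yields $\tnorm{\Phi(\Xi(\rho_{ABE}))-\Phi(\rho_{ABE})}\le 2c\sqrt{W}$, which is the required estimate. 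This completes the argument apart from the degenerate cases $\Tr(\rho_{AB}\Pi)=0$ or $\Tr(\rho_{AB}\Pin)=0$ (where $\rho^\Pi$ or $\rho^{\Pin}$ is undefined): there $\Pi\rho_{AB}\Pi$ equals either $0$ or $\rho_{AB}$, so $f(\Pi\rho\Pi)-f(\rho)\le\Delta(W)$ holds trivially because $\Delta\ge0$ and $f$ is nonnegative on classical--quantum states.

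I expect the genuine work to lie in the second paragraph: justifying the decomposition of the classical--quantum trace distance into a sum over measurement outcomes, correctly commuting $\Pi$ and $\Pin$ past the partial trace over $AB$, and the transpose identification that converts Eve's conditional-state differences into the operators $\sqrt{\rho_{AB}}\,H_k\,\sqrt{\rho_{AB}}$ on which Lemma~\ref{icttracenormlemma} is built to act. The rest is bookkeeping together with direct invocations of the three lemmas.
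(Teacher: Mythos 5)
Your proposal is correct and follows essentially the same route as the paper's proof: reduce to the dephased state via Lemma~\ref{ictdephaslemma}, split the output trace distance over the classical index $k$, identify each conditional-state difference with $\sqrt{\rho_{AB}}\,H_k\,\sqrt{\rho_{AB}}$ via the purification, bound each term by Lemma~\ref{icttracenormlemma} and sum to get $2c\sqrt{W}$, then invoke the continuity bound of Lemma~\ref{ucduplemma}. The only differences are presentational (you state the target trace-distance estimate up front, and you explicitly dispose of the degenerate cases $\Tr(\rho\Pi)=0$ or $\Tr(\rho\Pin)=0$, which the paper leaves implicit).
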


\begin{proof}[Proof of Theorem \ref{ictucdupthm}]
As per the definition of UCDUP, let $\rho_{AB}\in\normdensityH[\Hinf]$ be a state satisfying $\Tr(\rho_{AB} \Pin) \leq W$.

We first bound the trace distance between $\Phi(\Xi (\rho_{ABE}) )$ and $\Phi(\rho_{ABE} )$. We have
\begin{align}
\Phi(\Xi (\rho_{ABE})) &= \sum_k \dyad{k}_K \otimes \Tr_{AB} \left[ (P^k_{AB} \otimes \id_E) \Xi(\rho_{ABE})\right] \\
&=  \sum_k \dyad{k}_K \otimes \Tr_{AB} \left[(\Xi(P^k_{AB}) \otimes \id_E) \rho_{ABE}\right] \label{adjointpovm}
\end{align}
since the dephasing channel $\Xi$ is self-adjoint. Note that $\{ \Xi(P^k) \}$ is also a POVM, as the channel is positive and unital. In writing $\Phi(\Xi (\rho_{ABE}))$ in this manner, we are comparing the effect of two different channels on the same input state, instead of the same channel on two different inputs. 

Since the trace norm is additive over blocks corresponding to orthogonal subspaces, we have
\begin{align}
&\tnorm{\Phi(\rho_{ABE} ) - \Phi(\Xi (\rho_{ABE}) )}\\
&= \tnorm{\sum_k \dyad{k}_K \otimes \Tr_{AB} \left[\left(\left[P^k_{AB}-\Xi(P^k_{AB})\right] \otimes \id_E\right) \rho_{ABE}\right]}\\
& = \sum_k \tnorm{\Tr_{AB} \left[\left(\left[P^k_{AB}-\Xi(P^k_{AB})\right] \otimes \id_E\right) \rho_{ABE}\right]}.\label{tempeqn4}
\end{align}

To proceed, we find a more useful form for Eve's conditional states. Recall that Eve's register $E$ purifies $\rho_{AB}$. We can thus assume that Eve's register has the same dimension as Alice and Bob's. That is, $\Hspace_E=\Hspace_{AB}$. There then exists a bijective isometry $V: \Hspace_{AB} \rightarrow \Hspace_{E}$.
(To construct such an isometry, simply choose a basis $\ket{i}_{AB}$ for $\Hspace_{AB}$ and a basis $\ket{i}_E$ for $\Hspace_E$, and define $V\ket{i}_{AB}=\ket{i}_E$.) Via the vectorization mapping, it can easily be shown that 
\begin{equation}
\label{veccond}
\Tr_{AB} \left(\left(P^k_{AB} \otimes \id_E\right) \rho_{ABE} \right)= V\left( \sqrt{\rho_{AB}} P^k_{AB} \sqrt{\rho_{AB}} \right)\adj{V},
\end{equation}
and similarly for $\Xi(P^k_{AB})$.

Applying this identity to Eq. \eqref{tempeqn4}, we have
\begin{align}
&\tnorm{\Phi(\rho_{ABE} ) - \Phi(\Xi (\rho_{ABE}) )}\\
&= \sum_k \tnorm{ V \left( \sqrt{\rho_{AB}} \left[P^k_{AB}-\Xi(P^k_{AB})\right] \sqrt{\rho_{AB}} \right) \adj{V} }\\
&=\sum_k \tnorm{ \left( \sqrt{\rho_{AB}} \left[P^k_{AB}-\Xi(P^k_{AB})\right] \sqrt{\rho_{AB}} \right)}\\
&= \sum_k \tnorm{ \left( \sqrt{\rho_{AB}} \left[ \Pi P^k_{AB} \Pin + \Pin P^k_{AB} \Pi\right] \sqrt{\rho_{AB}} \right)}. \label{tempeqn3}
\end{align}
In keeping with our previous notation, we define $\rho^\Pi = \frac{\Pi \rho \Pi}{\Tr( \rho \Pi)}$ and $\rho^{\Pin} = \frac{\Pin \rho \Pin}{\Tr( \rho \Pin)}$, as well as the probability distributions $r(k) = \Tr(\rho^\Pi P_k)$ and $s(k) = \Tr(\rho^{\Pin} P_k)$. By Lemma \ref{icttracenormlemma} we have
\begin{align}
\sum_k &\tnorm{ \left( \sqrt{\rho_{AB}} \left[ \Pi P^k_{AB} \Pin + \Pin P^k_{AB} \Pi\right] \sqrt{\rho_{AB}} \right)} \\
&\leq \sum_k (r(k)+s(k)) \sqrt{W}  \snorm{\sqrt{A_k}^g B_k \sqrt{D_k}^g} \\
&= \sqrt{W} \bigg( \sum_k r(k) \snorm{\sqrt{A_k}^g B_k \sqrt{D_k}^g}  \nonumber \\
&\quad \quad \quad \quad\quad + \sum_k s(k)  \snorm{\sqrt{A_k}^g B_k \sqrt{D_k}^g}  \bigg)  \\
&\leq \sqrt{W} \bigg( \max_k \snorm{\sqrt{A_k}^g B_k \sqrt{D_k}^g}  \label{tempeqn} \nonumber\\
& \quad \quad \quad \quad \quad + \max_k \snorm{\sqrt{A_k}^g B_k \sqrt{D_k}^g}  \bigg)  \\
&= 2 \sqrt{W} \bigg(  \max_k \snorm{\sqrt{A_k}^g B_k \sqrt{D_k}^g}  \bigg),
\end{align}
where in Eq. \eqref{tempeqn} we used the fact that the sums over $r(k)$ and over $s(k)$ are both convex combinations of $\snorm{\sqrt{A_k}^g B_k \sqrt{D_k}^g} $, and hence upper-bounded by the largest of these terms. Re-inserting this bound in Eq. \eqref{tempeqn3}, we have
\begin{align}
&\tnorm{\Phi(\rho_{ABE} ) - \Phi(\Xi (\rho_{ABE}) )} \nonumber \\&\quad\quad\quad \leq 2 \sqrt{W}  \left( \max_k \snorm{\sqrt{A_k}^g B_k \sqrt{D_k}^g}  \right)
\end{align}
or
\begin{align}
&\frac{1}{2} \tnorm{\Phi(\rho_{ABE} ) - \Phi(\Xi (\rho_{ABE}) )} \nonumber \\&\quad\quad\quad  \leq \sqrt{W}  \left( \max_k \snorm{\sqrt{A_k}^g B_k \sqrt{D_k}^g}  \right).
\end{align}

Letting $c=  \max_k \snorm{\sqrt{A_k}^g B_k \sqrt{D_k}^g} $, by the continuity bound in Lemma \ref{ucduplemma}, we have 
\begin{align}
&H(Z|[E])_{\Phi(\Xi (\rho_{ABE}) )} - f(\rho_{AB}) \nonumber  \\
&\quad \quad \quad \leq c\sqrt{W} \log_2 \abs{Z} +\left(1+c\sqrt{W}\right) h\left(\frac{c\sqrt{W}}{1+c\sqrt{W}}\right).
\end{align}
By Lemma \ref{ictdephaslemma}, and the definition of UCDUP (Eq. \eqref{ucdupdefn}), the theorem statement follows.
\end{proof}

Note that $c$ is zero when all the POVM elements are block-diagonal, and increases to a maximum of 1 as the off-diagonal blocks of any POVM element get larger. It can be interpreted as a normalized measure of how well the POVM elements commute with the projection. This also indicates that a good choice of projection is one for which all POVM elements are close to block-diagonal. 

In Figure \ref{fig1}, we plot the improved correction term as a function of $c$, compared to the two cases we had before, for $\abs{Z}=4$. The new term smoothly interpolates between the commuting and maximally non-commuting cases. We see that if $c$ is small, the improved correction term reaches smaller values at larger $W$ than the old correction term. This in turn reduces the computational overhead, as it allows us to further reduce the dimension of the finite-dimensional subspace in our key rate calculation, even though this comes with an increase in the weight $W$ outside that subspace.

\begin{figure}
\centering
\includegraphics[width=\linewidth]{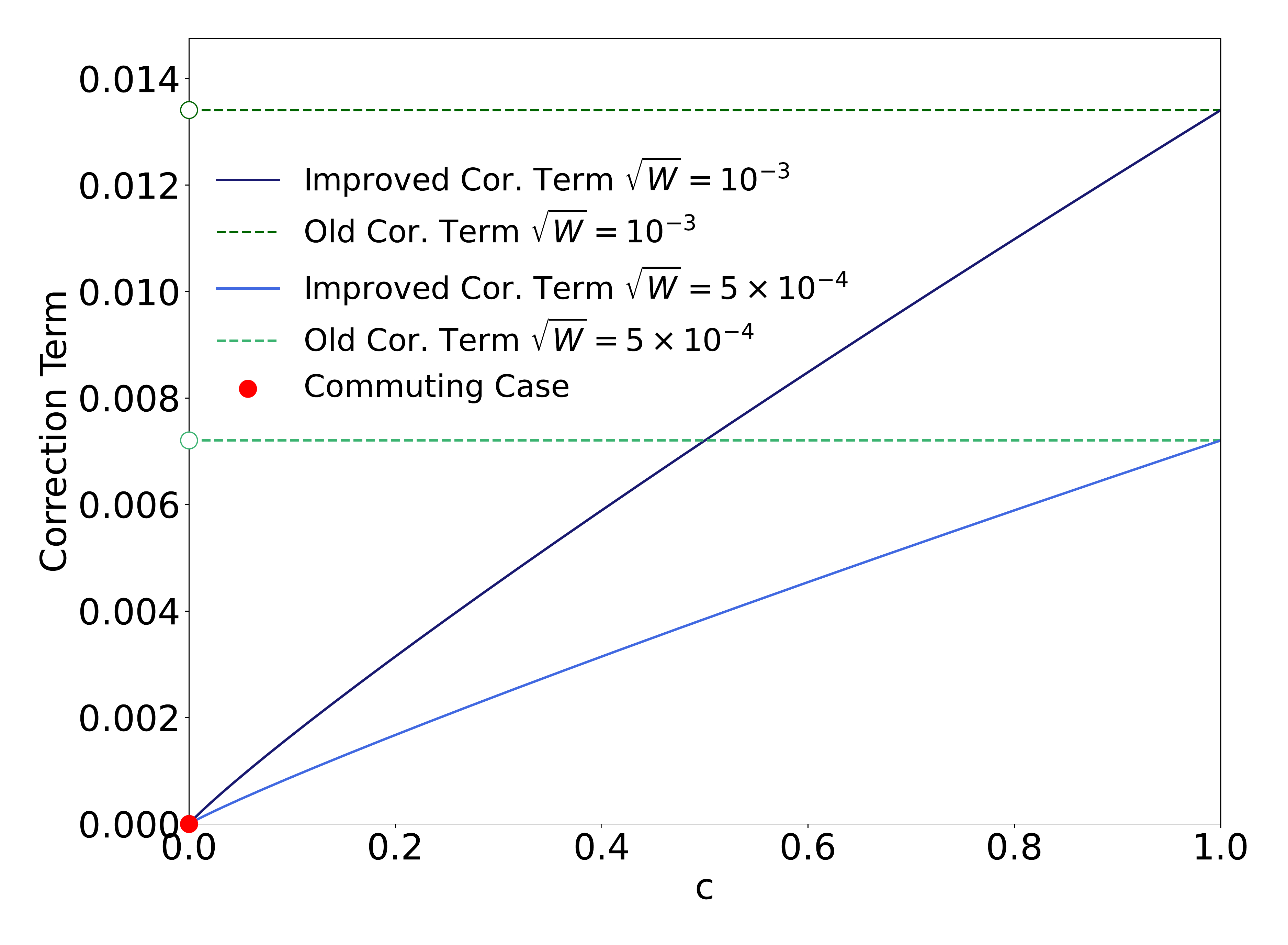}
\caption{Improved correction term compared to the old piecewise correction terms in the commuting and non-commuting case: $\abs{Z}=4$.}
\label{fig1}
\end{figure}

To calculate the improved correction term for a specific protocol, $c$ needs to be upper-bounded. If the POVM elements are rank-one, then $c$ can be calculated. For more complicated POVM elements, analytic tools to bound $c$ would be an important area for future research. To estimate $c$ numerically, we can calculate $\snorm{\sqrt{A_k}^g B_k \Pi_C \sqrt{D_k}^g \Pi_C}$, where $\Pi_C$ is a projector onto a subspace containing $\Pi$. By increasing the dimension of $\Pi_C$ until the approximations to $c$ seem to converge, we can estimate $c$ \cite{Upadhyaya2021M}.

\section{Conclusion}
In summary, we have determined a tighter form for the correction term in the dimension reduction method. This new correction term is applicable to general QKD protocols, and is small when the protocol has POVM elements which are close to block-diagonal. This result is conceptually interesting because it provides a unified perspective, interpolating between the cases of zero and maximal off-diagonal blocks. It is also practically relevant, because having a smaller correction term allows one to achieve the same key rate with a larger value of $W$, which corresponds to solving the finite-dimensional optimization in fewer dimensions. Due to the computational limitations of numerical SDP solvers, this can provide a significant numerical advantage. 

Interesting directions for future work include applying this correction term to specific protocols and finding regimes where it gives the most significant improvement.

\begin{acknowledgments}
The Institute for Quantum Computing is supported in part by Innovation, Science, and Economic Development Canada. This research has been supported by NSERC under the Discovery Grants Program, Grant No. 341495, and under the Collaborative Research and Development Program, Grant No. CRDP J 522308-17. Financial support for this work has been partially provided by Huawei Technologies Canada Co., Ltd.
\end{acknowledgments}

\bibliographystyle{modrevtex}
\bibliography{mainbib.bib}

\begin{thebibliography}{14}%
\makeatletter
\providecommand \@ifxundefined [1]{%
 \@ifx{#1\undefined}
}%
\providecommand \@ifnum [1]{%
 \ifnum #1\expandafter \@firstoftwo
 \else \expandafter \@secondoftwo
 \fi
}%
\providecommand \@ifx [1]{%
 \ifx #1\expandafter \@firstoftwo
 \else \expandafter \@secondoftwo
 \fi
}%
\providecommand \natexlab [1]{#1}%
\providecommand \enquote  [1]{``#1''}%
\providecommand \bibnamefont  [1]{#1}%
\providecommand \bibfnamefont [1]{#1}%
\providecommand \citenamefont [1]{#1}%
\providecommand \href@noop [0]{\@secondoftwo}%
\providecommand \href [0]{\begingroup \@sanitize@url \@href}%
\providecommand \@href[1]{\@@startlink{#1}\@@href}%
\providecommand \@@href[1]{\endgroup#1\@@endlink}%
\providecommand \@sanitize@url [0]{\catcode `\\12\catcode `\$12\catcode
  `\&12\catcode `\#12\catcode `\^12\catcode `\_12\catcode `\%12\relax}%
\providecommand \@@startlink[1]{}%
\providecommand \@@endlink[0]{}%
\providecommand \url  [0]{\begingroup\@sanitize@url \@url }%
\providecommand \@url [1]{\endgroup\@href {#1}{\urlprefix }}%
\providecommand \urlprefix  [0]{URL }%
\providecommand \Eprint [0]{\href }%
\providecommand \doibase [0]{https://doi.org/}%
\providecommand \selectlanguage [0]{\@gobble}%
\providecommand \bibinfo  [0]{\@secondoftwo}%
\providecommand \bibfield  [0]{\@secondoftwo}%
\providecommand \translation [1]{[#1]}%
\providecommand \BibitemOpen [0]{}%
\providecommand \bibitemStop [0]{}%
\providecommand \bibitemNoStop [0]{.\EOS\space}%
\providecommand \EOS [0]{\spacefactor3000\relax}%
\providecommand \BibitemShut  [1]{\csname bibitem#1\endcsname}%
\let\auto@bib@innerbib\@empty
\bibitem [{\citenamefont {Upadhyaya}\ \emph {et~al.}(2021)\citenamefont
  {Upadhyaya}, \citenamefont {van Himbeeck}, \citenamefont {Lin},\ and\
  \citenamefont {L\"utkenhaus}}]{Upadhyaya2021}%
  \BibitemOpen
  \bibfield  {author} {\bibinfo {author} {\bibfnamefont {T.}~\bibnamefont
  {Upadhyaya}}, \bibinfo {author} {\bibfnamefont {T.}~\bibnamefont {van
  Himbeeck}}, \bibinfo {author} {\bibfnamefont {J.}~\bibnamefont {Lin}},\ and\
  \bibinfo {author} {\bibfnamefont {N.}~\bibnamefont {L\"utkenhaus}},\
  }\bibfield  {title} {\emph {\bibinfo {title} {Dimension Reduction in Quantum
  Key Distribution for Continuous- and Discrete-Variable Protocols}},\ }\href
  {https://doi.org/10.1103/PRXQuantum.2.020325} {\bibfield  {journal} {\bibinfo
   {journal} {PRX Quantum}\ }\textbf {\bibinfo {volume} {2}},\ \bibinfo {pages}
  {020325} (\bibinfo {year} {2021})}\BibitemShut {NoStop}%
\bibitem [{\citenamefont {Scarani}\ \emph {et~al.}(2009)\citenamefont
  {Scarani}, \citenamefont {Bechmann-Pasquinucci}, \citenamefont {Cerf},
  \citenamefont {Du\ifmmode~\check{s}\else \v{s}\fi{}ek}, \citenamefont
  {L\"utkenhaus},\ and\ \citenamefont {Peev}}]{Scarani2009}%
  \BibitemOpen
  \bibfield  {author} {\bibinfo {author} {\bibfnamefont {V.}~\bibnamefont
  {Scarani}}, \bibinfo {author} {\bibfnamefont {H.}~\bibnamefont
  {Bechmann-Pasquinucci}}, \bibinfo {author} {\bibfnamefont {N.~J.}\
  \bibnamefont {Cerf}}, \bibinfo {author} {\bibfnamefont {M.}~\bibnamefont
  {Du\ifmmode~\check{s}\else \v{s}\fi{}ek}}, \bibinfo {author} {\bibfnamefont
  {N.}~\bibnamefont {L\"utkenhaus}},\ and\ \bibinfo {author} {\bibfnamefont
  {M.}~\bibnamefont {Peev}},\ }\bibfield  {title} {\emph {\bibinfo {title} {The
  security of practical quantum key distribution}},\ }\href
  {https://doi.org/10.1103/RevModPhys.81.1301} {\bibfield  {journal} {\bibinfo
  {journal} {Rev. Mod. Phys.}\ }\textbf {\bibinfo {volume} {81}},\ \bibinfo
  {pages} {1301} (\bibinfo {year} {2009})}\BibitemShut {NoStop}%
\bibitem [{\citenamefont {Xu}\ \emph {et~al.}(2020)\citenamefont {Xu},
  \citenamefont {Ma}, \citenamefont {Zhang}, \citenamefont {Lo},\ and\
  \citenamefont {Pan}}]{Xu2020}%
  \BibitemOpen
  \bibfield  {author} {\bibinfo {author} {\bibfnamefont {F.}~\bibnamefont
  {Xu}}, \bibinfo {author} {\bibfnamefont {X.}~\bibnamefont {Ma}}, \bibinfo
  {author} {\bibfnamefont {Q.}~\bibnamefont {Zhang}}, \bibinfo {author}
  {\bibfnamefont {H.-K.}\ \bibnamefont {Lo}},\ and\ \bibinfo {author}
  {\bibfnamefont {J.-W.}\ \bibnamefont {Pan}},\ }\bibfield  {title} {\emph
  {\bibinfo {title} {Secure quantum key distribution with realistic devices}},\
  }\href {https://doi.org/10.1103/RevModPhys.92.025002} {\bibfield  {journal}
  {\bibinfo  {journal} {Rev. Mod. Phys.}\ }\textbf {\bibinfo {volume} {92}},\
  \bibinfo {pages} {025002} (\bibinfo {year} {2020})}\BibitemShut {NoStop}%
\bibitem [{\citenamefont {Pirandola}\ \emph {et~al.}(2020)\citenamefont
  {Pirandola}, \citenamefont {Andersen}, \citenamefont {Banchi}, \citenamefont
  {Berta}, \citenamefont {Bunandar}, \citenamefont {Colbeck}, \citenamefont
  {Englund}, \citenamefont {Gehring}, \citenamefont {Lupo}, \citenamefont
  {Ottaviani}, \citenamefont {Pereira}, \citenamefont {Razavi}, \citenamefont
  {Shaari}, \citenamefont {Tomamichel}, \citenamefont {Usenko}, \citenamefont
  {Vallone}, \citenamefont {Villoresi},\ and\ \citenamefont
  {Wallden}}]{Pirandola2020}%
  \BibitemOpen
  \bibfield  {author} {\bibinfo {author} {\bibfnamefont {S.}~\bibnamefont
  {Pirandola}}, \bibinfo {author} {\bibfnamefont {U.~L.}\ \bibnamefont
  {Andersen}}, \bibinfo {author} {\bibfnamefont {L.}~\bibnamefont {Banchi}},
  \bibinfo {author} {\bibfnamefont {M.}~\bibnamefont {Berta}}, \bibinfo
  {author} {\bibfnamefont {D.}~\bibnamefont {Bunandar}}, \bibinfo {author}
  {\bibfnamefont {R.}~\bibnamefont {Colbeck}}, \bibinfo {author} {\bibfnamefont
  {D.}~\bibnamefont {Englund}}, \bibinfo {author} {\bibfnamefont
  {T.}~\bibnamefont {Gehring}}, \bibinfo {author} {\bibfnamefont
  {C.}~\bibnamefont {Lupo}}, \bibinfo {author} {\bibfnamefont {C.}~\bibnamefont
  {Ottaviani}}, \bibinfo {author} {\bibfnamefont {J.~L.}\ \bibnamefont
  {Pereira}}, \bibinfo {author} {\bibfnamefont {M.}~\bibnamefont {Razavi}},
  \bibinfo {author} {\bibfnamefont {J.~S.}\ \bibnamefont {Shaari}}, \bibinfo
  {author} {\bibfnamefont {M.}~\bibnamefont {Tomamichel}}, \bibinfo {author}
  {\bibfnamefont {V.~C.}\ \bibnamefont {Usenko}}, \bibinfo {author}
  {\bibfnamefont {G.}~\bibnamefont {Vallone}}, \bibinfo {author} {\bibfnamefont
  {P.}~\bibnamefont {Villoresi}},\ and\ \bibinfo {author} {\bibfnamefont
  {P.}~\bibnamefont {Wallden}},\ }\bibfield  {title} {\emph {\bibinfo {title}
  {Advances in quantum cryptography}},\ }\href
  {https://doi.org/10.1364/AOP.361502} {\bibfield  {journal} {\bibinfo
  {journal} {Adv. Opt. Photon.}\ }\textbf {\bibinfo {volume} {12}},\ \bibinfo
  {pages} {1012} (\bibinfo {year} {2020})}\BibitemShut {NoStop}%
\bibitem [{\citenamefont {Coles}\ \emph {et~al.}(2016)\citenamefont {Coles},
  \citenamefont {Metodiev},\ and\ \citenamefont {L\"{u}tkenhaus}}]{Coles2016}%
  \BibitemOpen
  \bibfield  {author} {\bibinfo {author} {\bibfnamefont {P.~J.}\ \bibnamefont
  {Coles}}, \bibinfo {author} {\bibfnamefont {E.~M.}\ \bibnamefont
  {Metodiev}},\ and\ \bibinfo {author} {\bibfnamefont {N.}~\bibnamefont
  {L\"{u}tkenhaus}},\ }\bibfield  {title} {\emph {\bibinfo {title} {Numerical
  approach for unstructured quantum key distribution}},\ }\href
  {https://doi.org/10.1038/ncomms11712} {\bibfield  {journal} {\bibinfo
  {journal} {Nature Communications}\ }\textbf {\bibinfo {volume} {7}},\
  \bibinfo {pages} {11712} (\bibinfo {year} {2016})}\BibitemShut {NoStop}%
\bibitem [{\citenamefont {Winick}\ \emph {et~al.}(2018)\citenamefont {Winick},
  \citenamefont {L{\"{u}}tkenhaus},\ and\ \citenamefont {Coles}}]{Winick2018}%
  \BibitemOpen
  \bibfield  {author} {\bibinfo {author} {\bibfnamefont {A.}~\bibnamefont
  {Winick}}, \bibinfo {author} {\bibfnamefont {N.}~\bibnamefont
  {L{\"{u}}tkenhaus}},\ and\ \bibinfo {author} {\bibfnamefont {P.~J.}\
  \bibnamefont {Coles}},\ }\bibfield  {title} {\emph {\bibinfo {title}
  {Reliable numerical key rates for quantum key distribution}},\ }\href
  {https://doi.org/10.22331/q-2018-07-26-77} {\bibfield  {journal} {\bibinfo
  {journal} {{Quantum}}\ }\textbf {\bibinfo {volume} {2}},\ \bibinfo {pages}
  {77} (\bibinfo {year} {2018})}\BibitemShut {NoStop}%
\bibitem [{\citenamefont {Zhang}\ \emph {et~al.}(2021)\citenamefont {Zhang},
  \citenamefont {Coles}, \citenamefont {Winick}, \citenamefont {Lin},\ and\
  \citenamefont {L\"utkenhaus}}]{Zhang2021}%
  \BibitemOpen
  \bibfield  {author} {\bibinfo {author} {\bibfnamefont {Y.}~\bibnamefont
  {Zhang}}, \bibinfo {author} {\bibfnamefont {P.~J.}\ \bibnamefont {Coles}},
  \bibinfo {author} {\bibfnamefont {A.}~\bibnamefont {Winick}}, \bibinfo
  {author} {\bibfnamefont {J.}~\bibnamefont {Lin}},\ and\ \bibinfo {author}
  {\bibfnamefont {N.}~\bibnamefont {L\"utkenhaus}},\ }\bibfield  {title} {\emph
  {\bibinfo {title} {Security proof of practical quantum key distribution with
  detection-efficiency mismatch}},\ }\href
  {https://doi.org/10.1103/PhysRevResearch.3.013076} {\bibfield  {journal}
  {\bibinfo  {journal} {Phys. Rev. Research}\ }\textbf {\bibinfo {volume}
  {3}},\ \bibinfo {pages} {013076} (\bibinfo {year} {2021})}\BibitemShut
  {NoStop}%
\bibitem [{\citenamefont {Li}\ and\ \citenamefont
  {L\"utkenhaus}(2020)}]{Li2020}%
  \BibitemOpen
  \bibfield  {author} {\bibinfo {author} {\bibfnamefont {N.~K.~H.}\
  \bibnamefont {Li}}\ and\ \bibinfo {author} {\bibfnamefont {N.}~\bibnamefont
  {L\"utkenhaus}},\ }\bibfield  {title} {\emph {\bibinfo {title} {Improving key
  rates of the unbalanced phase-encoded BB84 protocol using the flag-state
  squashing model}},\ }\href {https://doi.org/10.1103/PhysRevResearch.2.043172}
  {\bibfield  {journal} {\bibinfo  {journal} {Phys. Rev. Research}\ }\textbf
  {\bibinfo {volume} {2}},\ \bibinfo {pages} {043172} (\bibinfo {year}
  {2020})}\BibitemShut {NoStop}%
\bibitem [{\citenamefont {Devetak}\ and\ \citenamefont
  {Winter}(2005)}]{Devetak2005}%
  \BibitemOpen
  \bibfield  {author} {\bibinfo {author} {\bibfnamefont {I.}~\bibnamefont
  {Devetak}}\ and\ \bibinfo {author} {\bibfnamefont {A.}~\bibnamefont
  {Winter}},\ }\bibfield  {title} {\emph {\bibinfo {title} {Distillation of
  secret key and entanglement from quantum states}},\ }\href
  {https://doi.org/10.1098/rspa.2004.1372} {\bibfield  {journal} {\bibinfo
  {journal} {Proceedings of the Royal Society A: Mathematical, Physical and
  Engineering Sciences}\ }\textbf {\bibinfo {volume} {461}},\ \bibinfo {pages}
  {207} (\bibinfo {year} {2005})}\BibitemShut {NoStop}%
\bibitem [{Note1()}]{Note1}%
  \BibitemOpen
  \bibinfo {note} {As discussed in Ref. \cite {Upadhyaya2021}, the discard
  symbol $\perp $ is not included in the set of key symbols.}\BibitemShut
  {Stop}%
\bibitem [{\citenamefont {Watrous}(2018)}]{Watrous2018}%
  \BibitemOpen
  \bibfield  {author} {\bibinfo {author} {\bibfnamefont {J.}~\bibnamefont
  {Watrous}},\ }\href {https://doi.org/10.1017/9781316848142} {\emph {\bibinfo
  {title} {The Theory of Quantum Information}}},\ \bibinfo {edition} {1st}\
  ed.\ (\bibinfo  {publisher} {Cambridge University Press},\ \bibinfo {address}
  {Cambridge, UK},\ \bibinfo {year} {2018})\BibitemShut {NoStop}%
\bibitem [{\citenamefont {Bhatia}(1997)}]{Bhatia1996}%
  \BibitemOpen
  \bibfield  {author} {\bibinfo {author} {\bibfnamefont {R.}~\bibnamefont
  {Bhatia}},\ }\href {https://doi.org/10.1007/978-1-4612-0653-8} {\emph
  {\bibinfo {title} {Matrix Analysis}}},\ \bibinfo {edition} {1st}\ ed.,\
  Graduate Texts in Mathematics\ (\bibinfo  {publisher} {Springer
  Science+Business Media New York},\ \bibinfo {address} {New York, USA},\
  \bibinfo {year} {1997})\BibitemShut {NoStop}%
\bibitem [{Note2()}]{Note2}%
  \BibitemOpen
  \bibinfo {note} {In Ref. \cite {Bhatia1996}, this theorem is proven for
  finite-dimensional matrices where $B$ is a square block. However, nothing
  precludes the proof from applying in infinite dimensions and with rectangular
  blocks.}\BibitemShut {Stop}%
\bibitem [{\citenamefont {Upadhyaya}(2021)}]{Upadhyaya2021M}%
  \BibitemOpen
  \bibfield  {author} {\bibinfo {author} {\bibfnamefont {T.}~\bibnamefont
  {Upadhyaya}},\ }\emph {\bibinfo {title} {Tools for the Security Analysis of
  Quantum Key Distribution in Infinite Dimensions}},\ \href
  {http://hdl.handle.net/10012/17209} {Master's thesis},\ \bibinfo  {school}
  {University of Waterloo} (\bibinfo {year} {2021})\BibitemShut {NoStop}%
\end{thebibliography}%

\end{document}